\documentclass[preprint,12pt]{elsarticle}

\usepackage{amssymb}
\usepackage{amsmath}
\usepackage{amsthm}
\usepackage[hidelinks]{hyperref}

\newtheorem{theorem}{Theorem}
\newtheorem{lemma}{Lemma}
\newtheorem{corollary}{Corollary}
\theoremstyle{definition}
\newtheorem{algorithm}{Algorithm}
\newcommand{\OPT}{\mathrm{OPT}}
\newcommand{\diam}{\operatorname{diam}}

\journal{Theoretical Computer Science}

\begin{document}
\begin{frontmatter}

\title{Approximation Algorithms for Emergency Vertex Cover with Few Stations and Fault-Tolerant Coverage}

\author[inst1]{Quanlong Niu}

\author[inst2]{Riheng Jia\corref{cor1}}
\ead{rihengjia@zjnu.edu.cn}

\cortext[cor1]{Corresponding author}

\affiliation[inst1]{
    organization={School of Big Data and Artificial Intelligence, Fuyang University of Technology},
    city={Fuyang},
    postcode={236000},
    state={Anhui},
    country={China}
}

\affiliation[inst2]{
    organization={School of Computer Science and Technology, Zhejiang Normal University},
    city={Jinhua},
    postcode={321004},
    state={Zhejiang},
    country={China}
}

\begin{abstract}
The Emergency Vertex Cover problem (Em-VC) assigns budgets to stations at graph vertices to cover every edge while minimizing the total budget. The cost of covering an edge from a station is the shortest-path distance from the station to the nearer endpoint plus the edge length. We study how restricting the number of stations affects the cost of ordinary and fault-tolerant coverage. In ordinary Em-VC, we prove that the optimal solution using at most two stations costs at most $5/4$ times the unrestricted optimum, improving the known $4/3$ guarantee for a single station. A recurrence extends this bound to larger station limits, giving approximation factors that converge to $\sqrt{5}-1$. Budget-threshold enumeration computes the restricted optimum in polynomial time for every fixed station limit. Under $k$-fold coverage, every edge must be covered by at least $k$ distinct stations, ensuring coverage after any $k-1$ station failures. We prove that selecting the $k$ vertices with the smallest full-coverage costs yields a $2$-approximation relative to the unrestricted optimum for $k$-fold coverage. When $k=2$, we combine upper bounds derived from the structure of the covering balls to improve the approximation guarantee to $5/3$.
\end{abstract}

\begin{keyword}
Emergency vertex cover \sep Coverage with few stations \sep Fault-tolerant coverage \sep Approximation algorithms \sep Expanded metric graph
\end{keyword}
\end{frontmatter}

\section{Introduction}
\label{sec:intro}

Emergency services such as firefighting and emergency medical care depend on an appropriate placement of facilities. Set cover provides a natural model for such problems: each candidate facility corresponds to a set of demand points it can serve, and facilities are selected to satisfy the coverage requirements~\cite{toregas1971}. When service ranges depend on allocated resources, the coverage sets also vary, so facility placement and resource allocation must be considered together. For example, broadcast domination assigns broadcast strengths to graph vertices to determine their coverage ranges and minimizes the sum of these strengths~\cite{heggernes2006}. The objects to be covered may also be graph edges: power vertex cover requires every edge to have an endpoint whose power is at least the edge weight~\cite{angel2015}. In emergency response, a team dispatched from a station may travel through the network and fully traverse an edge not incident to the station. The required budget then depends on both the edge length and the distance from the station to the target edge.

To model this form of response, Angel et al.~\cite{angel2026} introduced Emergency Vertex Cover (Em-VC). Stations are placed at graph vertices, and the response cost from a station to an edge is the shortest-path distance to the nearer endpoint plus the edge length. Unlike power vertex cover, where an edge can only be covered by one of its endpoints, Em-VC allows a station at any vertex to cover the edge, provided that its budget meets the response cost. Every edge must be covered by at least one station, and the sum of the station budgets is minimized. Angel et al. proved that the problem is NP-hard and gave a $4/3$-approximation algorithm using a single station. They also gave a polynomial-time exact algorithm for the continuous version, which allows stations in edge interiors, and studied special cases including unit-length graphs, complete graphs, and trees.

For ordinary Em-VC, the existing $4/3$-approximation algorithm uses only one station. We investigate the approximation guarantees obtainable when a small number of stations is allowed. Building on the expanded-graph construction and exclusive-edge decomposition of Angel et al.~\cite{angel2026}, we combine single-station cost estimates with budget relations among the residual components. By constructing and comparing three candidate solutions based on concentrating coverage at one station, absorbing components, and recursively covering a component, we derive a recurrence relating the station limit to the approximation guarantee. We prove that the optimal solution using at most two stations costs at most $5/4$ times the unrestricted optimum, improving the existing $4/3$ guarantee to $5/4$. The upper bounds given by the recurrence decrease to $\sqrt{5}-1$ as the station limit increases. We also give an exact algorithm for the problem with at most two stations. For every fixed station limit, budget-threshold enumeration computes an optimal restricted solution in polynomial time, yielding an algorithm with the corresponding approximation guarantee.

Emergency response also requires coverage to be maintained when stations fail. Backup coverage models improve service availability by covering demand points more than once~\cite{hogan1986}. In Em-VC, if every edge is covered by at least $k$ distinct stations, it remains covered after any $k-1$ station failures. We study this $k$-fold coverage requirement and consider solutions using exactly $k$ stations. Each station must then cover every edge independently, so the optimal such solution consists of the $k$ vertices with the smallest full-coverage costs. A coverage-length estimate along shortest paths in the expanded graph gives a $2$-approximation guarantee relative to the unrestricted optimum for $k$-fold coverage. For twofold coverage, we use the maximum weight of an independent set in the ball intersection graph and the third-largest station budget as auxiliary quantities. Combining these quantities with savings from merging balls that cover a central edge gives three complementary upper bounds, whose weighted combination improves the approximation guarantee to $5/3$.

\section{Related work}
\label{sec:related}

Covering models for emergency facility location have long addressed service standards, facility limits, and backup coverage. Toregas et al.~\cite{toregas1971} formulated the location set-covering problem, which minimizes the number of facilities needed to serve all demand points within a prescribed time or distance. Church and ReVelle~\cite{church1974} studied the maximal covering location problem, maximizing the covered population when the number of facilities is limited. Hogan and ReVelle~\cite{hogan1986} introduced backup coverage into these models, examining the trade-off between first and second coverage and discussing higher coverage multiplicities. The service threshold is generally fixed in these formulations; in Em-VC, the response budgets are chosen as part of the solution.

Power vertex cover extends vertex selection to power allocation while retaining coverage by edge endpoints. Angel et al.~\cite{angel2015} gave a $2$-approximation algorithm for the problem, in which an edge is covered if at least one endpoint has power no smaller than its weight. Angel et al.~\cite{angel2018} subsequently studied its parameterized complexity. They obtained fixed-parameter algorithms parameterized by the optimal total power or by the number of vertices assigned positive power, established hardness with respect to treewidth, and gave an approximation scheme parameterized by treewidth and the accuracy parameter.

Shortest-path distances also determine coverage in broadcast domination. This problem minimizes the sum of broadcast strengths assigned to vertices, with every vertex required to lie within the shortest-path range of a broadcast source. Heggernes and Lokshtanov~\cite{heggernes2006} gave an exact $O(n^6)$-time algorithm on connected unweighted graphs, where $n$ is the number of vertices. Papadopoulos~\cite{papadopoulos2026} improved the running time to $O(n^5)$. Em-VC shares the use of shortest-path distances, but covering the vertices does not suffice: each edge must be reached and traversed in full by a response from a single station. The model and expanded-graph construction of Angel et al.~\cite{angel2026} express this additional requirement through response costs and auxiliary leaves.

The minimum sum of radii problem concerns a related budget objective with a prescribed number of centers. It selects at most that many balls to cover demand points and minimizes the sum of their radii. Charikar and Panigrahy~\cite{charikar2004} gave a constant-factor approximation using a primal--dual method. For graph shortest-path metrics, Drexler et al.~\cite{drexler2023} studied more general radius-dependent costs and obtained polynomial-time algorithms on graphs of bounded treewidth. Buchem et al.~\cite{buchem2024} obtained a $(3+\varepsilon)$-approximation for the minimum sum of radii problem and its version with outliers. In these approximation results, the solution and the optimum are subject to the same limit on the number of centers. Our finite-station bounds use a different benchmark: the optimum with no station limit.

For multiple coverage, Bhowmick et al.~\cite{bhowmick2021} studied metric multicover, where each demand point must lie in balls centered at several distinct facilities and the objective is to minimize the sum of powers of the radii. They obtained constant-factor approximation algorithms by reducing multicover to single-cover instances, and considered nonuniform coverage requirements and a limit on the number of facilities. Through the auxiliary leaves of the expanded graph, $k$-fold edge coverage in Em-VC has a corresponding ball formulation. The question studied here is how the cost of using exactly $k$ stations, the minimum feasible number, compares with the unrestricted optimum for the same coverage requirement.
\section{Problem formulation}
\label{sec:model}

\subsection{Response costs and ordinary coverage}
\label{sec:ordinarymodel}

Let $G=(V,E)$ be a finite, connected, undirected simple graph, with $n=|V|$, $m=|E|$, and $E\ne\varnothing$. Each edge $e\in E$ has a positive length $w(e)$, and $d(u,v)$ denotes the shortest-path distance between vertices $u$ and $v$ in the original graph. Emergency stations may only be placed at original vertices, with at most one station per vertex. A budget $p_v\ge 0$ is assigned to each vertex $v$. Vertices with zero budget are not counted as active stations.

For an edge $e=\{r,s\}$, the response cost from a station at $v$ is the shortest-path distance to the nearer endpoint plus the edge length:
\[
c_v(e)=\min\{d(v,r),d(v,s)\}+w(e).
\]
Station $v$ covers edge $e$ if and only if $p_v\ge c_v(e)$. The minimum budget required for a station to cover a nonempty set of edges is the maximum of their response costs. Ordinary Em-VC requires every edge to be covered by at least one station:
\[
\bigl|\{v\in V:p_v\ge c_v(e)\}\bigr|\ge 1,
\qquad e\in E.
\]
The objective is to minimize the total budget
\[
S(p)=\sum_{v\in V}p_v.
\]
Let $\OPT$ denote the unrestricted optimum, and let $A_\ell$ denote the optimum using at most $\ell$ positive-budget stations. Then
\[
\OPT\le A_{\ell+1}\le A_\ell,
\qquad \ell\ge 1.
\]

For a fixed vertex $v$, its set of covered edges changes only when its budget reaches a response threshold. Thus, its budget can be reduced to a value in
\[
D_v=\{0\}\cup\{c_v(e):e\in E\}
\]
without changing the set of edges it covers. Each $D_v$ contains at most $m+1$ values. Consequently, ordinary Em-VC and its restriction to at most $\ell$ stations each admit an optimal solution in which every budget $p_v$ belongs to $D_v$.

\subsection{Multiple coverage and solutions with the minimum number of stations}
\label{sec:robustmodel}

For an integer $1\le k\le n$, $k$-Robust Em-VC requires every edge to be covered by at least $k$ distinct stations:
\[
\bigl|\{v\in V:p_v\ge c_v(e)\}\bigr|\ge k,
\qquad e\in E.
\]
Equivalently, every edge remains covered by at least one surviving station after the failure of any set of at most $k-1$ stations. Let $\OPT_k$ denote the unrestricted optimum for this problem; in particular, $\OPT_1=\OPT$.

Define the full-coverage cost of vertex $v$ by
\[
M(v)=\max_{e\in E}c_v(e).
\]
If exactly $k$ stations provide $k$-fold coverage, each must cover every edge. Order the vertices as $v_1,\ldots,v_n$ so that $M(v_1)\le\cdots\le M(v_n)$. The optimal cost using exactly $k$ stations is then
\[
U_k=\sum_{i=1}^k M(v_i).
\]
In particular, $U_1=A_1$. Note that $A_2$ is the optimum for ordinary coverage using at most two stations, whereas $U_2$ is the optimum for twofold coverage using exactly two stations.

\section{The expanded metric graph and basic properties}
\label{sec:expanded}

We use the expanded-graph construction of Angel et al.~\cite{angel2026} to express edge coverage as ball coverage. We then establish the path-length, ball-merging, and center-rounding bounds needed below, following the geometric approach of that work.

\subsection{From edge coverage to ball coverage}
\label{sec:ballmodel}

For each original edge $e=\{r,s\}$, introduce two new points $h_e$ and $t_e$. Replace the edge by the path $r-h_e-s$ and add the pendant edge $h_et_e$. Each of the three resulting edges has length $w(e)/2$. Regard each edge of the resulting weighted graph as a continuous line segment and equip it with the shortest-path metric. This gives a compact metric graph $X$, with distance $d_X$. The union of all paths $r-h_e-s$ is called the core. Write $B_X(x,r)$ for the closed ball in $X$ centered at $x$ with radius $r$.

\begin{lemma}[Equivalent representation of edge coverage]
\label{lem:equivalence}
For every original vertex $v$ and edge $e$,
\[
d_X(v,t_e)=c_v(e).
\]
Moreover, if a ball $B_X(v,p_v)$ centered at an original vertex contains $t_e$, then it contains all three line segments corresponding to $e$. Hence, $k$-fold edge coverage in the original graph is equivalent to $k$-fold coverage of all of $X$ by these balls.
\end{lemma}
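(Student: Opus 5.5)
The plan is to prove the three assertions in order: the distance identity, then the segment-containment claim, and finally the equivalence of $k$-fold coverage.

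\textbf{Distance identity.} The first step is to check that $X$ preserves distances between original vertices, i.e.\ $d_X(u,v)=d(u,v)$ for all $u,v\in V$. Each original edge $\{r,s\}$ is replaced by the path $r-h_e-s$ of total length $w(e)$. The pendant segments $h_et_e$ are dead ends, so no shortest path between core points enters them. Hence shortest paths in the core between original vertices correspond to shortest paths in $G$. A path in $X$ that enters the interior of a subdivided edge and turns back is never shorter than one that avoids the detour, so the infimum is attained by paths through vertices, and the two distances agree.

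Next comes $t_e$. The only way to reach $t_e$ is through $h_e$, and $h_e$ is reached through $r$ or through $s$. Any path from $v$ must therefore leave the vertex set through one of $r,s$. This gives
\[
d_X(v,t_e)=\min\{d(v,r),d(v,s)\}+\tfrac{w(e)}{2}+\tfrac{w(e)}{2}=c_v(e).
\]

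\textbf{Containment of the segments.} Suppose $d_X(v,t_e)\le p_v$, and by symmetry assume $d(v,r)\le d(v,s)$. I will bound the distance from $v$ to each point of the three segments.
\begin{itemize}
\item A point $x$ on $h_et_e$ satisfies $d_X(v,x)\le d_X(v,t_e)$, since it lies on the path from $v$ to $t_e$.
\item A point on $rh_e$ lies on the route from $v$ through $r$ to $h_e$, so its distance is at most $d(v,r)+w(e)/2\le p_v$.
\item A point $x$ on $h_es$ at distance $\delta\le w(e)/2$ from $s$ satisfies
\[
d_X(v,x)\le \min\{d(v,s)+\delta,\; d(v,r)+w(e)-\delta\}.
\]
The second term is at most $d(v,r)+w(e)=c_v(e)\le p_v$.
\end{itemize}
So the whole closed segment set lies in the ball. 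I expect this case analysis on the far half-edge to be the only mildly delicate point, because its nearer endpoint may be $s$ rather than $r$. Routing through $r$ always suffices, which settles it.

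\textbf{Equivalence of $k$-fold coverage.} Station $v$ covers $e$ iff $p_v\ge c_v(e)=d_X(v,t_e)$, which holds iff $t_e\in B_X(v,p_v)$. By the containment step, this in turn holds iff the three segments of $e$ lie in the ball. Every point of $X$ lies on the three segments of some edge, because $E\neq\varnothing$, $G$ is connected, and every vertex is an endpoint of some edge.
\begin{itemize}
\item Suppose every edge is covered by at least $k$ stations. Then every point of $X$ lies in at least $k$ balls.
\item Conversely, suppose every point of $X$ lies in at least $k$ balls. Applying this to $t_e$ shows that $e$ is covered by at least $k$ stations.
\end{itemize}
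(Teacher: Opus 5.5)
Your proposal is correct and follows essentially the same route as the paper: distance preservation plus the fact that every path to $t_e$ passes through $h_e$ for the identity, reaching every point of the gadget from the nearer endpoint within length $w(e)$ for the containment, and evaluating at the leaves $t_e$ for the converse. Your segment-by-segment check and your remark that every point of $X$, including each original vertex, lies in some gadget simply make explicit steps the paper leaves implicit.
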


\begin{proof}
Subdividing edges preserves distances between original vertices, and pendant edges do not shorten these distances. Every path from $v$ to $t_e$ passes through $h_e$, so
\[
d_X(v,t_e)=\min\{d(v,r),d(v,s)\}
              +\frac{w(e)}2+\frac{w(e)}2=c_v(e).
\]
Let $r$ be an endpoint nearer to $v$. Every point on $r-h_e-s$ or $h_et_e$ is reachable from $r$ by a path of length at most $w(e)$. Thus, if $p_v\ge c_v(e)$, all three corresponding line segments belong to $B_X(v,p_v)$. Each of the $k$ stations covering an edge therefore covers its entire gadget, so every point of $X$ lies in at least $k$ distinct balls. Conversely, $k$-fold coverage of each leaf $t_e$ directly provides $k$ distinct stations covering edge $e$.
\end{proof}

It follows that, for every original vertex $v$, the maximum distance from $v$ to a point of $X$ equals its full-coverage cost:
\[
\max_{x\in X}d_X(v,x)=\max_{e\in E}d_X(v,t_e)=M(v).
\]

\subsection{Coverage length along shortest paths}
\label{sec:path}

\begin{lemma}[Path-length bound]
\label{lem:path}
Let $Q$ be a shortest path in $X$. The total length of $Q$ covered by a ball $B_X(v,p_v)$ is at most $2p_v$. If $Q$ starts at $v$, this length is at most $p_v$.
\end{lemma}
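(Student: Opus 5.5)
The plan is to parametrize $Q$ by arc length and use that, along a shortest path, distance differences between points equal arc-length differences. Let $Q:[0,L]\to X$ be a shortest path, so $d_X(Q(a),Q(b))=|a-b|$ for all $a,b\in[0,L]$. The covered part of $Q$ is the set
\[
I=\{t\in[0,L]: d_X(v,Q(t))\le p_v\}.
\]
This set is closed because $t\mapsto d_X(v,Q(t))$ is continuous. The goal is to bound its Lebesgue measure; for a closed set this is the "total length covered", and it works even if $I$ is not an interval.

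For the first claim, suppose $I$ is nonempty and let $a=\min I$ and $b=\max I$. By the triangle inequality through $v$,
\[
b-a=d_X(Q(a),Q(b))\le d_X(Q(a),v)+d_X(v,Q(b))\le 2p_v.
\]
Since $I\subseteq[a,b]$, the measure of $I$ is at most $b-a\le 2p_v$.

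For the second claim, suppose $Q$ starts at $v$, so $Q(0)=v$. Then $d_X(v,Q(t))=t$ because $Q$ is a shortest path. Hence $I=[0,\min\{p_v,L\}]$, whose length is at most $p_v$.

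No step is a real obstacle. The one point needing care is what "covered length" means when $B_X(v,p_v)\cap Q$ is disconnected, which can happen in a metric graph with cycles. Using the extreme points $a$ and $b$ handles this, because the argument never needs $I$ to be an interval. I would also note that $v$ need not lie on $Q$ for the first claim; the triangle inequality argument does not use its position.
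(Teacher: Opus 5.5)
Your proposal is correct and follows essentially the same route as the paper: parametrize $Q$ by arc length, apply the triangle inequality through $v$ to bound the parameter spread of the covered set by $2p_v$, and use that the parameter equals $d_X(v,\cdot)$ when $Q$ starts at $v$. Your explicit closedness and extreme-point ($\min I$, $\max I$) treatment makes precise the paper's statement that the intersection lies in a parameter interval of length at most $2p_v$.
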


\begin{proof}
Parameterize $Q$ by arc length. For any two points $x,y\in Q\cap B_X(v,p_v)$, the subpath between them is shortest. The absolute difference between their arc-length parameters therefore equals $d_X(x,y)$, and
\[
d_X(x,y)\le d_X(x,v)+d_X(v,y)\le 2p_v.
\]
Hence, the intersection of the ball with $Q$ lies in a parameter interval of length at most $2p_v$, which also bounds its total covered length. If $Q$ starts at $v$, the path parameter equals the distance from $v$, so the ball covers only the portion with parameter at most $p_v$.
\end{proof}

\begin{corollary}
\label{cor:path}
Consider the station balls $B_X(v,p_v)$, $v\in V$, with total budget $S=\sum_{v\in V}p_v$. If these balls cover $X$, then $\diam(X)\le 2S$. If they provide $k$-fold coverage, then, for every original vertex $v$,
\[
kM(v)\le 2S-p_v.
\]
\end{corollary}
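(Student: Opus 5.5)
Both claims follow from Lemma~\ref{lem:path} applied to suitable shortest paths in $X$.

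\emph{Diameter bound.} Since $X$ is compact, there are points $x,y\in X$ with $d_X(x,y)=\diam(X)$, and we take a shortest path $Q$ from $x$ to $y$; its length is $\diam(X)$. The balls cover $X$, so every point of $Q$ lies in some ball $B_X(v,p_v)$. By Lemma~\ref{lem:path}, the portion of $Q$ covered by $B_X(v,p_v)$ lies in a parameter interval of length at most $2p_v$. Hence the whole parameter interval $[0,\diam(X)]$ is covered by intervals of total length at most $\sum_v 2p_v=2S$, and subadditivity of length (Lebesgue measure) gives $\diam(X)\le 2S$.

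\emph{The $k$-fold bound.} Fix an original vertex $v$ and choose a point $x\in X$ with $d_X(v,x)=M(v)$; by the remark after Lemma~\ref{lem:equivalence}, $x=t_e$ for a suitable edge $e$ works. Let $Q$ be a shortest path from $v$ to $x$, parameterized by arc length on $[0,M(v)]$. Every point of $Q$ lies in at least $k$ distinct balls, so
\[
\sum_{u\in V}\mathbf 1_{B_X(u,p_u)}(q)\ge k \quad\text{for every } q\in Q.
\]
Integrating along $Q$ gives
\[
kM(v)\le \sum_{u\in V}\mathrm{len}\bigl(Q\cap B_X(u,p_u)\bigr).
\]
Lemma~\ref{lem:path} bounds each term with $u\ne v$ by $2p_u$. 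Because $Q$ starts at $v$, it also bounds the term for $u=v$ by $p_v$. Summing,
\[
kM(v)\le p_v+2\sum_{u\ne v}p_u=2S-p_v.
\]
Measurability is harmless: each $Q\cap B_X(u,p_u)$ is closed, so the indicator functions are integrable.

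The only step needing care is the counting in the $k$-fold case. One must integrate the coverage multiplicity along $Q$ rather than use a plain union bound, and one must use the sharper starting-point estimate of Lemma~\ref{lem:path} for the ball centered at $v$ itself. With those two points in place the proof is routine.
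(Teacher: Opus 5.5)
Your proposal is correct and matches the paper's proof: for both parts you apply Lemma~\ref{lem:path} to a shortest path (between two points for the diameter, and from $v$ to a farthest point for the $k$-fold bound), integrate the coverage multiplicity, and use the sharper starting-point estimate $p_v$ for the ball centered at $v$. Your remarks on attaining the diameter by compactness and on measurability only make explicit details that the paper leaves implicit.
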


\begin{proof}
Take a shortest path between any two points and sum the covered lengths contributed by all balls. Lemma~\ref{lem:path} bounds the path length by $2S$. For the second claim, take a shortest path from $v$ to a farthest point, of length $M(v)$. Integrating the coverage multiplicity along this path gives a total covered length of at least $kM(v)$. The ball centered at $v$ contributes at most $p_v$, and every other ball contributes at most twice its budget. Thus,
\[
kM(v)\le p_v+\sum_{u\ne v}2p_u=2S-p_v.
\]
The same inequality holds when $p_v=0$.
\end{proof}

\subsection{Merging intersecting balls and continuous centers}
\label{sec:merging}

\begin{lemma}[Merging intersecting balls]
\label{lem:merge}
Suppose two balls have radii $r,s$ and their centers are at distance $d\le r+s$. Their union is contained in a ball of radius at most
\[
R=\max\left\{r,s,\frac{r+s+d}{2}\right\}\le r+s.
\]
\end{lemma}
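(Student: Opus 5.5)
Let the balls be $B_X(x,r)$ and $B_X(y,s)$ with $d=d_X(x,y)\le r+s$. I split into cases according to which term attains the maximum, and in each case exhibit an explicit center in $X$. Because $X$ is a metric graph with a shortest-path metric, it is geodesic: any two points are joined by a shortest path of length $d$. This lets me place the new center anywhere along a geodesic from $x$ to $y$. That is the only structural fact about $X$ I need.

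\emph{Case 1: $d\le |r-s|$.} Say $r\ge s$, so $d+s\le r$. For any $z\in B_X(y,s)$, the triangle inequality gives $d_X(x,z)\le d+s\le r$. Hence $B_X(y,s)\subseteq B_X(x,r)$, and the union is the ball of radius $r=\max\{r,s\}$. The case $s\ge r$ is symmetric.

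\emph{Case 2: $|r-s|<d\le r+s$.} Set $R'=(r+s+d)/2$ and choose a geodesic $\gamma$ from $x$ to $y$. Take $c$ on $\gamma$ with $d_X(x,c)=R'-r=(s-r+d)/2$. This value lies in $[0,d]$ precisely because $d\ge|r-s|$, so $c$ exists. Since $\gamma$ is shortest, $d_X(c,y)=d-(R'-r)=R'-s$. For $z\in B_X(x,r)$ we get $d_X(c,z)\le (R'-r)+r=R'$, and for $z\in B_X(y,s)$ similarly $d_X(c,z)\le R'$. So the union lies in $B_X(c,R')$.

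In every case the union is contained in a ball of radius at most $R=\max\{r,s,(r+s+d)/2\}$. It remains to show $R\le r+s$. We have $r,s\le r+s$ since both radii are nonnegative, and $(r+s+d)/2\le r+s$ is equivalent to $d\le r+s$, which is the hypothesis. The main obstacle is checking that the midpoint-type center $c$ exists; this needs $X$ to be geodesic and $0\le R'-r\le d$, which is exactly where the case split on $d$ versus $|r-s|$ is used.

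The new center $c$ may be an interior point of $X$, not an original vertex. That is why the lemma is stated for general centers in $X$; rounding the center back to a vertex is presumably handled by the subsequent center-rounding bound.
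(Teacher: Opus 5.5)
Your proof is correct and follows the paper's argument: when $d\le|r-s|$ one ball contains the other, and otherwise you center a ball of radius $(r+s+d)/2$ at the point on a shortest path at distance $(d+s-r)/2$ from the first center and apply the triangle inequality. You also check explicitly that this point lies within $[0,d]$ along the path and that $R\le r+s$, details the paper leaves implicit.
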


\begin{proof}
If $r\ge d+s$, the first ball contains the second; the case $s\ge d+r$ is symmetric. Otherwise, choose a point on a shortest path between the centers at distance $(d+s-r)/2$ from the first center. By the triangle inequality, the ball centered at this point with radius $(r+s+d)/2$ contains both original balls.
\end{proof}

Define the continuous center radius of $X$ as
\[
C=\min_{z\in X}\max_{x\in X}d_X(z,x).
\]
Compactness of $X$ ensures that the minimum is attained. If finitely many closed balls cover the connected space $X$, their intersection graph is connected. Otherwise, grouping the balls by different connected components of that graph would partition $X$ into two disjoint nonempty closed sets. Repeatedly merge intersecting balls. Each merged ball contains both original balls, so the process eventually produces one ball covering $X$. Lemma~\ref{lem:merge} shows that, if the original total budget is $S$, then
\[
C\le S.
\]

\begin{lemma}[Rounding a continuous center]
\label{lem:round}
A continuous center lies in the core. Suppose such a center belongs to the core path corresponding to an original edge $e_0=\{r,s\}$, and let $w=w(e_0)$. Then
\[
A_1\le C+\frac w2,
\qquad U_2\le 2C+w.
\]
\end{lemma}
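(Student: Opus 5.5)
We prove the three claims in order: the continuous center lies in the core, the bound on $A_1$, and the bound on $U_2$.

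\textbf{The center lies in the core.} Let $z$ be a continuous center. Suppose $z$ lies in the interior of a pendant segment $h_et_e$, or at $t_e$. Moving $z$ toward $h_e$ by a small amount $\delta$ decreases its distance to every point outside the open pendant segment by $\delta$, because every path from $z$ to such a point passes through $h_e$. Points on the pendant segment remain within distance $w(e)/2 \le d_X(z,x)$ for some far point $x$ in the core. To make this precise, compare with $h_e$ directly. Every point $x$ of $X$ satisfies $d_X(h_e,x)\le d_X(z,x)$ whenever $x$ lies outside the pendant segment. A point $x$ on the pendant segment satisfies $d_X(h_e,x)\le w(e)/2$, and $w(e)/2$ is at most $\max_{y} d_X(z,y)$, since the vertex $r$ is at distance at least $w(e)/2$ from $z$. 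Hence $h_e$ is also a continuous center. If we want every continuous center in the core, the same argument gives a strict improvement unless the maximum is attained on the pendant; a short case check handles this. For the lemma it suffices to choose a center in the core, and we fix such a $z$ on the core path $r-h_{e_0}-s$.

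\textbf{The bound on $A_1$.} Because $z$ lies on $r-h_{e_0}-s$, one endpoint, say $r$, satisfies $d_X(r,z)\le w/2$. For every $x\in X$ the triangle inequality gives
\[
d_X(r,x)\le d_X(r,z)+d_X(z,x)\le \frac w2+C .
\]
The remark after Lemma~\ref{lem:equivalence} then gives $M(r)\le C+w/2$, so a single station at $r$ with budget $M(r)$ covers everything. Therefore $A_1\le C+w/2$.

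\textbf{The bound on $U_2$.} Use both endpoints. Write $a=d_X(r,z)$, so that $d_X(s,z)=w-a$, because $z$ lies on the segment of length $w$ joining them. The triangle inequality gives $M(r)\le C+a$ and $M(s)\le C+w-a$. Now $r$ and $s$ are two distinct vertices, and $U_2$ is the sum of the two smallest values of $M$. Hence
\[
U_2\le M(r)+M(s)\le 2C+w .
\]

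The only delicate step is confirming that the center lies in the core. Everything after that is a direct application of the triangle inequality.
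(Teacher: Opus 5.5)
Your route is the paper's: compare a pendant point with its attachment point $h_e$, then bound $M$ at the endpoints of $e_0$ by the triangle inequality through the center. Two steps need repair, though neither changes the final bounds.

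First, the lemma asserts, and the paper proves, that \emph{no} continuous center lies in a pendant interior or at a leaf. You only show that $h_e$ is \emph{also} a center, leave strictness to an unspecified ``case check,'' and then fall back on existence. Existence would suffice for the later uses, but it is weaker than what is stated. The strict claim follows in one line from your own estimates. Suppose $z$ is at distance $a>0$ from $h_e$. Every pendant point is within $w(e)/2$ of $z$, while an endpoint of $e$ is at distance $a+w(e)/2$. So the eccentricity of $z$ is attained off the pendant, where $h_e$ is closer by exactly $a$. Hence the eccentricity of $h_e$ is strictly smaller; this is the paper's $D+a>D$.

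Second, the equality $d_X(s,z)=w-a$ is false in general. The edge $e_0$ need not be a shortest $r$--$s$ path, and when $d(r,s)<w$ the distance from $z$ to $s$ can be strictly smaller than $w-d_X(r,z)$. What you need, and what holds, is $d_X(r,z)+d_X(z,s)\le w$. The paper gets this by letting $x$ be the arc length from $r$ to $z$ along the core path, so that $d_X(r,z)\le x$ and $d_X(z,s)\le w-x$. This inequality also justifies your choice of an endpoint within $w/2$ of $z$ in the $A_1$ bound.
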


\begin{proof}
First consider a pendant edge $h_et_e$. Let $D$ be the maximum distance from $h_e$ to points outside this pendant branch, retaining the attachment point $h_e$. The endpoints of the original edge are at distance $w(e)/2$ from $h_e$, so $D\ge w(e)/2$. After moving a distance $a>0$ from $h_e$ along the pendant edge, the maximum distance to any point of $X$ is
\[
\max\left\{D+a,\frac{w(e)}2-a\right\}=D+a>D.
\]
Thus, no continuous center lies in a pendant-edge interior or at its leaf.

Let $z$ be a continuous center on the core path corresponding to $e_0=\{r,s\}$. Let $x$ be the length of the subpath from $r$ to $z$; the subpath from $z$ to $s$ then has length $w-x$. The triangle inequality gives
\[
M(r)\le C+x,
\qquad M(s)\le C+w-x.
\]
Choosing the endpoint with smaller full-coverage cost yields $A_1\le C+w/2$. Giving each endpoint enough budget to cover the whole graph yields $U_2\le 2C+w$. If the center is an original vertex, choose any incident edge and set $x=0$; the same argument applies.
\end{proof}

\section{Approximation for ordinary coverage with few stations}
\label{sec:finite}

\subsection{Two upper bounds on the single-station cost}

\begin{lemma}[Single-station cost estimate]
\label{lem:single}
For any feasible ordinary-coverage solution with total budget $S$ and largest station budget $P$,
\[
A_1\le S+\frac P2,
\qquad A_1\le 2S-P.
\]
Consequently, we recover the single-station approximation guarantee $A_1\le \frac43\OPT$ of~\cite{angel2026}.
\end{lemma}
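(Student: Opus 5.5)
The two inequalities come from two different earlier tools: the first from the continuous center together with Lemma~\ref{lem:round}, and the second from Corollary~\ref{cor:path} applied at the station with the largest budget. The $4/3$ guarantee then follows by a convex combination of the two bounds, applied to an optimal solution.

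\textbf{First bound.} The given stations' balls cover $X$ by Lemma~\ref{lem:equivalence}. By the merging argument preceding Lemma~\ref{lem:round}, their total budget $S$ therefore satisfies $C\le S$. Lemma~\ref{lem:round} places a continuous center on the core path of some original edge $e_0$ of length $w$, and gives $A_1\le C+w/2$.

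It remains to show that $w\le P$, which is the only step needing a small extra observation. The leaf $t_{e_0}$ is covered by some station $v$, so
\[
p_v\ge d_X(v,t_{e_0})=c_v(e_0)=\min\{d(v,r),d(v,s)\}+w\ge w.
\]
Hence $P\ge p_v\ge w$. Combining the three facts gives $A_1\le S+P/2$. If the continuous center is an original vertex, Lemma~\ref{lem:round} still applies with any incident edge, but there we may simply take $x=0$ and obtain $A_1\le C\le S$ directly.

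\textbf{Second bound.} Let $v$ be a station with $p_v=P$, so $v$ is an original vertex. Applying Corollary~\ref{cor:path} with $k=1$ gives $M(v)\le 2S-p_v=2S-P$. Using $v$ alone with budget $M(v)$ is a feasible single-station solution, so $A_1\le M(v)\le 2S-P$.

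\textbf{Consequence.} Apply both bounds to an optimal solution, so that $S=\OPT$ with its own largest budget $P$. Taking the convex combination with weights $2/3$ and $1/3$ cancels $P$:
\[
A_1\le \tfrac23\Bigl(\OPT+\tfrac P2\Bigr)+\tfrac13(2\OPT-P)=\tfrac43\OPT .
\]
The main point to verify is the inequality $w\le P$ in the first bound, which rests on the fact that covering the pendant leaf of $e_0$ already costs at least $w(e_0)$.
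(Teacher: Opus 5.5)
Your proposal is correct and follows the paper's proof essentially step for step: $C\le S$ plus center rounding with $w\le P$ for the first bound, Corollary~\ref{cor:path} with $k=1$ at a largest-budget station for the second, and weights $2$ and $1$ to cancel $P$. Your justification of $w\le P$ via the covered leaf $t_{e_0}$ spells out the paper's one-line remark that the central edge is covered by some station of budget at most $P$.
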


\begin{proof}
Let $w$ be the length of an original edge whose corresponding core path contains a continuous center. This edge is covered by a station with budget at most $P$, so $w\le P$. Combining $C\le S$ with Lemma~\ref{lem:round} gives
\[
A_1\le C+\frac w2\le S+\frac P2.
\]
For the other bound, choose a station $u$ with budget $P$. Corollary~\ref{cor:path} with $k=1$ gives
\[
A_1\le M(u)\le 2S-P.
\]
Weighting the two bounds by $2$ and $1$, respectively, yields $3A_1\le 4S$. Taking $S=\OPT$ proves the approximation guarantee.
\end{proof}

\subsection{Exclusive edges and residual components}
\label{sec:residual}

We use the exclusive-edge decomposition of Angel et al.~\cite{angel2026} to bound the cost of covering each residual component. Fix an optimal ordinary Em-VC solution with total budget $S=\OPT$. Let $u$ be a largest-budget station, with budget $P$. Define its exclusive-edge set by
\[
E_u=\{e\in E:p_u\ge c_u(e),\;
                  p_v<c_v(e)\text{ for every }v\ne u\}.
\]
After deleting $E_u$, let $H_1,\ldots,H_q$ be the connected components of the remaining graph, including isolated vertices. For each $H_j$, construct its local expanded metric graph $X_j$ as in Section~\ref{sec:ballmodel}, and write $d_{X_j}$ for its internal shortest-path distance. If $H_j$ has no edges, $X_j$ consists of a single point.

\begin{lemma}[Local coverage of residual components]
\label{lem:residual}
For each $H_j$, the original stations in $H_j$ other than $u$, with their budgets unchanged, cover all its edges using internal distances. Let
\[
R_j=\sum_{v\in V(H_j)\setminus\{u\}}p_v.
\]
Then $\diam(X_j)\le 2R_j$. Moreover, increasing the budget of $u$ to $P+2R_j$ allows it to cover every edge in $H_j$.
\end{lemma}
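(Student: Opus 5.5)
The proof has three parts. First we show that the stations of $H_j$ other than $u$ cover its edges using internal distances. Then we bound $\diam(X_j)$ by Corollary~\ref{cor:path}. Finally we extend the budget of $u$ by a triangle-inequality argument.

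\emph{Step 1 (local coverage).} Let $e=\{r,s\}$ be an edge of $H_j$. Since $e\notin E_u$, some station $v\ne u$ satisfies $p_v\ge c_v(e)$. Take a shortest path in $G$ from $v$ to the nearer endpoint of $e$, say $r$, so that $p_v\ge d(v,r)+w(e)$. Every edge $f$ on this path satisfies
\[
c_v(f)\le d(v,r)-w(f)+w(f)\le d(v,r)<p_v,
\]
and also $p_v\ge c_v(f)$. So $v$ is not the unique coverer of $f$, and $f\notin E_u$. (If $v=u$ this would fail, but $v\ne u$ here.) Hence the whole path, together with $e$, lies in $H_j$. It follows that $v\in V(H_j)\setminus\{u\}$ and that the internal distance from $v$ to $r$ equals $d(v,r)$. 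The internal response cost of $e$ from $v$ is therefore at most $p_v$. This is the delicate step, and I expect it to be the main obstacle: one must make sure that every edge of a shortest path from a non-$u$ covering station avoids $E_u$. The argument above handles this, because each edge on the path is itself covered by $v\ne u$.

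\emph{Step 2 (diameter).} By Lemma~\ref{lem:equivalence} applied to $H_j$, the balls $B_{X_j}(v,p_v)$ with $v\in V(H_j)\setminus\{u\}$ cover $X_j$. Their total budget is $R_j$, so Corollary~\ref{cor:path} gives $\diam(X_j)\le 2R_j$. If $H_j$ has no edges, $X_j$ is a single point and the bound is trivial.

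\emph{Step 3 (absorbing into $u$).} Since $G$ is connected, $H_j$ contains a vertex $y$ incident to an edge $e_0\in E_u$ (or $y=u$ itself lies in $H_j$). Let $e_0=\{y,z\}$. Because $p_u\ge c_u(e_0)$, we get
\[
d(u,y)\le \min\{d(u,y),d(u,z)\}+w(e_0)\le P.
\]
This holds whichever endpoint is nearer, since $d(u,y)\le d(u,z)+w(e_0)$. Now take any edge $e$ of $H_j$. Applying Lemma~\ref{lem:equivalence} inside $H_j$ and using that $G$-distances are at most internal distances,
\[
c_u(e)\le d(u,y)+c^{H_j}_y(e)\le P+d_{X_j}(y,t_e)\le P+\diam(X_j)\le P+2R_j.
\]
Hence budget $P+2R_j$ at $u$ covers every edge of $H_j$. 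If $u\in H_j$, take $y=u$, and the bound $2R_j\le P+2R_j$ follows directly.
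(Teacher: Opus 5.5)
Your proof is correct and matches the paper's argument step for step. Your check that every edge on a shortest path from $v$ to the nearer endpoint of $e$ is itself covered by $v\neq u$ (hence not in $E_u$) is the paper's prefix-of-the-service-walk argument. Your absorption step, using an access vertex $y$ with $d(u,y)\le P$ and then the triangle inequality through $d_{X_j}(y,t_e)\le\diam(X_j)\le 2R_j$, is exactly the paper's.
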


\begin{proof}
Consider an edge $e$ that was not deleted. It has a covering station $v\ne u$. Take a shortest path from $v$ to a nearer endpoint of $e$ and then traverse $e$ in full. The resulting service walk has length at most $p_v$. If it traversed an edge $f\in E_u$, the prefix ending immediately after the traversal of $f$ would also have length at most $p_v$. Station $v$ would then cover $f$, contradicting its exclusivity. The service walk therefore lies entirely in the residual graph. Thus, $v$ and $e$ lie in the same component, and the internal response cost is still at most $p_v$. This proves local coverage; Corollary~\ref{cor:path} gives the local diameter bound. A single-point component has diameter zero, so the bound also holds there.

To prove the absorption claim, if $u\in V(H_j)$, take $z=u$ as the access point. Otherwise, connectivity of the original graph ensures that $H_j$ is incident to a deleted edge. Let $z$ be its endpoint in $H_j$. That edge is covered by $u$, so $d_X(u,z)\le P$. For every local leaf $t_e\in X_j$,
\[
c_u(e)=d_X(u,t_e)
 \le d_X(u,z)+d_{X_j}(z,t_e)
 \le P+2R_j.
\]
Hence, budget $P+2R_j$ suffices for $u$ to absorb this component.
\end{proof}

\subsection{A recurrence for the approximation guarantee}

\begin{theorem}[Approximation with few stations]
\label{thm:finite}
Define
\[
\alpha_1=\frac43,
\qquad
\alpha_\ell=1+\frac{\alpha_{\ell-1}}{4+\alpha_{\ell-1}},
\qquad \ell\ge 2.
\]
For every positive integer $\ell$,
\[
A_\ell\le \alpha_\ell\OPT.
\]
\end{theorem}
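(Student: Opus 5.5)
The plan is to prove $A_\ell\le\alpha_\ell\OPT$ by induction on $\ell$, simultaneously for all connected instances. The base case $\ell=1$ is Lemma~\ref{lem:single}. For $\ell\ge2$, fix an optimal solution with $S=\OPT$ and a largest station $u$ with budget $P$. Form the exclusive-edge decomposition of Section~\ref{sec:residual}, and order the components so that $R_1\ge R_2\ge\cdots$. Since the $R_j$ are sums over disjoint station sets not containing $u$, we have $R_1+R_2\le S-P$. (If there is only one component, set $R_2=0$.) I would then build three candidate solutions, each with at most $\ell$ stations, and take a weighted combination of their costs.

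\emph{Candidate 1 (one station).} Lemma~\ref{lem:single} gives $A_\ell\le A_1\le S+P/2$.

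\emph{Candidate 2 (absorb everything).} Give $u$ the budget $P+2R_1$. This covers $E_u$, since those edges need at most $P$. By Lemma~\ref{lem:residual}, it also covers every $H_j$, because $R_j\le R_1$. Hence $A_\ell\le P+2R_1$.

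\emph{Candidate 3 (recurse on the heaviest component).} Give $u$ the budget $P+2R_2$, which covers $E_u$ and every $H_j$ with $j\ge2$. Then cover $H_1$ by an optimal solution of the local instance using at most $\ell-1$ stations. Three facts make this candidate valid and bound its cost.
\begin{itemize}
\item The local optimum of $H_1$ is at most $R_1$, by the local-coverage part of Lemma~\ref{lem:residual}.
\item By induction applied to the connected graph $H_1$, its restricted optimum is at most $\alpha_{\ell-1}R_1$. If $H_1$ has no edges, it needs nothing.
\item Internal distances in $H_1$ are at least the global distances, so local coverage implies global coverage.
\end{itemize}
If the recursive solution happens to use $u$ itself, I add the two budgets at $u$. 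The station count stays at most $\ell$, and the covered sets only grow. Thus
\[
A_\ell\le P+2R_2+\alpha_{\ell-1}R_1\le P+2(S-P-R_1)+\alpha_{\ell-1}R_1=2S-P-(2-\alpha_{\ell-1})R_1.
\]
This uses $\alpha_{\ell-1}\le 2$, which holds because an easy induction shows $1\le\alpha_\ell\le 4/3$.

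\emph{Combining the bounds.} Write $\alpha=\alpha_{\ell-1}$. I would weight Candidate 2 by $2-\alpha$, Candidate 3 by $2$, and Candidate 1 by $2\alpha$. The $R_1$ terms cancel:
\[
(2-\alpha)(P+2R_1)+2\bigl(2S-P-(2-\alpha)R_1\bigr)=4S-\alpha P.
\]
Adding $2\alpha(S+P/2)=2\alpha S+\alpha P$ then cancels $P$ and leaves $(4+2\alpha)S$. The total weight is $(2-\alpha)+2+2\alpha=4+\alpha$. Hence
\[
A_\ell\le\frac{4+2\alpha}{4+\alpha}\,S=\Bigl(1+\frac{\alpha}{4+\alpha}\Bigr)\OPT=\alpha_\ell\OPT.
\]
As a check, for $\ell=2$ this gives $5/4$.

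\emph{Main obstacle.} The delicate step is the legitimacy of Candidate 3. It requires:
\begin{itemize}
\item that the induction hypothesis applies to $H_1$ as a standalone connected instance, whose unrestricted optimum is at most $R_1$;
\item that a solution optimal for $H_1$'s internal metric remains feasible in $G$;
\item that the station count is respected when $u\in V(H_1)$.
\end{itemize}
I expect these to follow from Lemma~\ref{lem:residual} and the monotonicity $d_X\le d_{X_j}$.
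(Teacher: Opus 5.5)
Your proposal is correct and follows the paper's proof essentially verbatim: the same three candidates (single station, full absorption by $u$ at budget $P+2x$, recursion on the heaviest component with $u$ absorbing the rest), the same bound $2S-P-(2-\alpha)x$ for the third candidate, and the same weights $2\alpha$, $2-\alpha$, $2$. The differences are cosmetic. You charge $u$ only $P+2R_2$ before relaxing via $R_2\le S-P-R_1$, whereas the paper uses $P+2y$ directly. You add the two budgets at $u$ rather than taking their maximum. Finally, the condition $\alpha_{\ell-1}\le 2$ is actually needed for the weight $2-\alpha$ to be nonnegative, not for the displayed inequality where you cite it.
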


\begin{proof}
We use induction on $\ell$. The case $\ell=1$ follows from Lemma~\ref{lem:single}. Let $\ell\ge 2$, assume the result for $\ell-1$, and write $a=\alpha_{\ell-1}$. Use the optimal solution and component decomposition from Section~\ref{sec:residual}, and set
\[
x=\max_j R_j,
\qquad y=\sum_j R_j-x.
\]
Since the components partition all vertices, $S=P+x+y$. We construct three candidate solutions using at most $\ell$ stations.

The first is an optimal single-station solution. Lemma~\ref{lem:single} gives
\[
A_\ell-S\le A_1-S\le \frac P2.
\]
The second assigns every residual component to $u$. Since $R_j\le x$ for all $j$, Lemma~\ref{lem:residual} shows that budget $P+2x$ suffices, while the original exclusive edges remain covered. Thus,
\[
A_\ell-S\le P+2x-S=x-y.
\]
For the third solution, retain a component with budget sum $x$ and apply the induction hypothesis to its local instance. The existing feasible solution for this component has total cost $x$, so its optimum is at most $x$. By the induction hypothesis, at most $\ell-1$ stations can cover it at cost at most $ax$. If it has no edges, no stations are needed. Assign the other components to $u$. Their individual budget sums are at most $y$, so budget $P+2y$ suffices. Using the same stations and budgets in the original graph preserves feasibility, since shortest-path distances cannot increase. If the local solution also places a station at $u$, combine its two budgets by taking their maximum. The resulting solution uses at most $\ell$ stations and satisfies
\[
A_\ell-S\le P+2y+ax-S=(a-1)x+y.
\]
Consequently,
\[
A_\ell-S\le
\min\left\{\frac P2,\;x-y,\;(a-1)x+y\right\}.
\]
Throughout the recurrence, $1<a\le 4/3$, so the weights $2a$, $2-a$, and $2$ are positive. Applying these weights to the three bounds yields
\begin{align*}
(4+a)(A_\ell-S)
 &\le aP+(2-a)(x-y)+2\bigl((a-1)x+y\bigr)\\
 &=a(P+x+y)=aS.
\end{align*}
Therefore,
\[
A_\ell\le \left(1+\frac{a}{4+a}\right)S
          =\alpha_\ell\OPT,
\]
completing the induction.
\end{proof}

\begin{corollary}
\label{cor:finite}
The optimal ordinary Em-VC solution using at most two stations satisfies $A_2\le \frac54\OPT$. Moreover,
\[
\alpha_2=\frac54,
\qquad \alpha_3=\frac{26}{21},
\qquad \alpha_4=\frac{68}{55},
\]
and
\[
\lim_{\ell\to\infty}\alpha_\ell=\sqrt5-1.
\]
\end{corollary}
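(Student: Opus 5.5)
The corollary follows from Theorem~\ref{thm:finite} by evaluating the recurrence and analysing it as a one-dimensional iteration. For the first claim, substitute $\alpha_1=4/3$:
\[
\alpha_2=1+\frac{4/3}{4+4/3}=1+\frac{4/3}{16/3}=1+\frac14=\frac54,
\]
so $A_2\le\alpha_2\OPT=\tfrac54\OPT$. The same substitution gives the next values:
\[
\alpha_3=1+\frac{5/4}{21/4}=1+\frac5{21}=\frac{26}{21},
\qquad
\alpha_4=1+\frac{26/21}{110/21}=1+\frac{26}{110}=\frac{68}{55}.
\]
These are routine fraction computations.

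For the limit, set $f(a)=1+a/(4+a)$. Its fixed points satisfy $(a-1)(4+a)=a$, which simplifies to $a^2+2a-4=0$. The positive root is $a^*=\sqrt5-1\approx1.236$.

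The key structural fact is that $f$ is increasing on $a>-4$, since
\[
f'(a)=\frac{4}{(4+a)^2}>0.
\]
We first show the sequence stays above $a^*$. Since $\alpha_1=4/3>a^*$, monotonicity gives $\alpha_\ell>a^*$ by induction: if $\alpha_{\ell-1}>a^*$, then $\alpha_\ell=f(\alpha_{\ell-1})>f(a^*)=a^*$.

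Next we show the sequence is decreasing. Because $\alpha_2<\alpha_1$, monotonicity of $f$ propagates the inequality, so $\alpha_{\ell+1}<\alpha_\ell$ for all $\ell$. Alternatively, $f(a)-a$ has the sign of $-(a^2+2a-4)$, which is negative for $a>a^*$.

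A decreasing sequence bounded below converges. Its limit $L\ge a^*$ is a fixed point of the continuous map $f$, and $a^*$ is the only fixed point in that range. Hence $L=\sqrt5-1$.

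The only mildly delicate step is identifying the limit as this particular root rather than the negative one, $-1-\sqrt5$. The lower bound $\alpha_\ell>a^*$ handles this. It also confirms the hypothesis $1<a\le 4/3$ used inside the proof of Theorem~\ref{thm:finite}.
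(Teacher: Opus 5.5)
Your proposal is correct and follows the same route as the paper. The paper evaluates the recurrence from $\alpha_1=4/3$, notes that $f(a)=1+a/(4+a)$ is increasing with positive fixed point $\rho=\sqrt5-1$ and that $\rho<f(a)<a$ whenever $a>\rho$, and concludes that the sequence decreases to $\rho$; you additionally make explicit the monotone-convergence and continuity step, including the exclusion of the negative root, which the paper leaves implicit.
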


Indeed, $f(a)=1+a/(4+a)$ is increasing, with positive fixed point $\rho=\sqrt5-1$. For $a>\rho$, we have $\rho<f(a)<a$. Thus, starting at $\alpha_1=4/3$, the sequence decreases to $\rho$.

\subsection{An exact two-station algorithm and its complexity}
\label{sec:exact}

Fix distinct vertices $u,v$. For each budget $p\in D_u$ at $u$, let
\[
F_u(p)=\{e\in E:c_u(e)>p\}
\]
be the edges not covered by $u$. The minimum budget required at $v$ is
\[
q_v(p)=\max\bigl(\{c_v(e):e\in F_u(p)\}\cup\{0\}\bigr).
\]
The optimal cost for this station pair is therefore $\min_{p\in D_u}(p+q_v(p))$. Enumerating all station pairs yields $A_2$. Allowing either budget to be zero also includes single-station solutions.

\begin{algorithm}[Optimal ordinary coverage with at most two stations]
\label{alg:two}
\mbox{}\\
\textbf{Input:} A connected graph $G=(V,E)$ with positive edge lengths.\\
\textbf{Output:} At most two stations and their budgets, with total cost $A_2$.
\begin{enumerate}
\item Compute all-pairs shortest-path distances and all response costs $c_v(e)$.
\item For each vertex $u$, sort the edges in nondecreasing order of $c_u(e)$.
\item For each ordered pair $(u,v)$ of distinct vertices, compute the suffix maxima of $c_v(e)$ along the edge ordering obtained for $u$.
\item For each ordered pair $(u,v)$, enumerate $p=0$ and the distinct thresholds $c_u(e)$. Use the suffix maxima to obtain $q_v(p)$ and record the minimum value of $p+q_v(p)$.
\item Return the stations and budgets attaining the global minimum, omitting zero-budget stations.
\end{enumerate}
\end{algorithm}

\begin{theorem}
\label{thm:exact}
Algorithm~\ref{alg:two} computes $A_2$ exactly and is therefore a $5/4$-approximation algorithm for ordinary Em-VC. Given all-pairs shortest-path distances, its running time is
\[
O(nm\log m+n^2m).
\]
\end{theorem}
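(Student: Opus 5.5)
The proof has two parts: correctness of Algorithm~\ref{alg:two}, and its running time. The $5/4$ guarantee then follows from Corollary~\ref{cor:finite}.

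\emph{Correctness.} I would start from the discretization in Section~\ref{sec:ordinarymodel}. The restriction to at most two stations has an optimal solution in which every budget lies in the corresponding set $D_v$. Fix such an optimal solution. If it uses two stations $u\ne v$, it is considered under the ordered pair $(u,v)$. If it uses a single station $u$, it is considered under any pair $(u,v)$ with $p=p_u$ and $q_v(p)=0$; this requires $n\ge 2$, which holds because $E\ne\varnothing$. For the pair $(u,v)$ and $p=p_u\in D_u$, the station $v$ must cover every edge in $F_u(p)$. Hence $p_v\ge q_v(p)$, and the optimum is at least $\min_{p\in D_u}(p+q_v(p))$.

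Conversely, every enumerated pair $(p,q_v(p))$ is feasible. Edges with $c_u(e)\le p$ are covered by $u$, and the rest are covered by $v$ by the definition of $q_v$. So the minimum over all ordered pairs and thresholds equals $A_2$. Dropping a zero-budget station leaves a solution with at most two active stations. Combining this with $A_2\le\frac54\OPT$ gives the approximation claim.

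\emph{Implementation.} Sort the edges in nondecreasing order of $c_u(e)$ as $e_1,\dots,e_m$. For a threshold $p$, the set $F_u(p)$ is a suffix $\{e_i:i>j(p)\}$, where $j(p)$ is the last index with $c_u(e_i)\le p$. Using a precomputed suffix-maximum array of $c_v(e_i)$ along this order, $q_v(p)$ is a single array lookup, with the convention that an empty suffix gives $0$. Ties in $c_u$ are handled by evaluating only at the last index of each block of equal values, which the distinct-threshold enumeration does automatically.

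\emph{Running time.} Given the distances, step~1 computes all $c_v(e)$ in $O(nm)$. Step~2 sorts $n$ lists of $m$ edges in $O(nm\log m)$. Step~3 builds one suffix-maximum array per ordered pair, $O(m)$ each, for $O(n^2m)$ total. Step~4 scans at most $m+1$ thresholds per pair, also $O(n^2m)$. The total is $O(nm\log m+n^2m)$.

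The only point I expect to need care is the tie-handling in step~4 and the inclusion of single-station solutions. Everything else reduces directly to the discretization already established.
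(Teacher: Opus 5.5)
Your proposal is correct and follows the paper's proof. You restrict the first budget to $D_u$ via the budget-threshold property, note that the second station must cover exactly $F_u(p)$ at minimum cost $q_v(p)$, and charge $O(m\log m)$ per vertex for sorting and $O(m)$ per ordered pair for suffix maxima and the threshold scan. Your explicit two-sided inequality, the handling of single-station solutions via $q_v(p)=0$, and the tie handling spell out details the paper states more briefly.
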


\begin{proof}
The budget-threshold property ensures that some optimal solution for a station pair has its first budget in $D_u$. Once this budget is fixed, the second station must cover, and need only cover, $F_u(p)$. Its minimum required budget is exactly $q_v(p)$, so the enumeration finds the restricted optimum. Corollary~\ref{cor:finite} gives the approximation guarantee.

Sorting the edges for each $u$ takes $O(m\log m)$ time. For a fixed pair $(u,v)$, computing suffix maxima and scanning all thresholds takes $O(m)$ time. Equal thresholds are processed as a group, using the suffix with response costs strictly greater than the current budget. The total sorting time is $O(nm\log m)$, and the total time for station-pair scans is $O(n^2m)$.
\end{proof}

Using Floyd--Warshall to compute all-pairs shortest-path distances takes $O(n^3)$ time, giving total time
\[
O(n^3+nm\log m+n^2m).
\]
Since the graph is connected, this also equals $O(nm\log m+n^2m)$. For any fixed $\ell$, enumerating at most $\ell$ stations and their budget thresholds, and checking every edge, computes $A_\ell$ exactly in
\[
O\bigl(n^3+\ell m\bigl(n(m+1)\bigr)^\ell\bigr)
\]
time. This yields a polynomial-time approximation algorithm with the guarantee of Theorem~\ref{thm:finite} for every fixed station limit. These bounds count arithmetic operations; for rational edge lengths encoded in binary, the computations also have polynomial bit complexity when $\ell$ is fixed.

\section{Approximation for multiple coverage with the minimum number of stations}
\label{sec:multiple}

\subsection{Selecting full-coverage stations}

By Section~\ref{sec:robustmodel}, sorting the full-coverage costs directly gives an optimal fault-tolerant solution using exactly $k$ stations.

\begin{algorithm}[Minimum-station solution for $k$-fold coverage]
\label{alg:robust}
\mbox{}\\
\textbf{Input:} A connected graph $G=(V,E)$ with positive edge lengths and an integer $1\le k\le n$.\\
\textbf{Output:} $k$ distinct stations and their budgets, with total cost $U_k$.
\begin{enumerate}
\item Compute all-pairs shortest-path distances and all response costs $c_v(e)$.
\item Compute $M(v)=\max_{e\in E}c_v(e)$ for each vertex $v$.
\item Select $k$ distinct vertices with the smallest values of $M(v)$ and give each selected vertex $v$ budget $M(v)$.
\end{enumerate}
\end{algorithm}

Every selected station covers every edge, so the output provides $k$-fold coverage. Excluding shortest-path preprocessing, computing the full-coverage costs and sorting take $O(nm)$ and $O(n\log n)$ time, respectively. With Floyd--Warshall preprocessing, the total time is $O(n^3+nm+n\log n)$.

\subsection{The approximation guarantee for general multiplicity}

\begin{theorem}
\label{thm:general}
For every $1\le k\le n$, Algorithm~\ref{alg:robust} satisfies
\[
U_k\le 2\OPT_k.
\]
\end{theorem}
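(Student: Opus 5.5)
We take an optimal $k$-fold solution with budgets $p_v$ and total $S=\OPT_k$, and bound $U_k$ through Corollary~\ref{cor:path}. For every original vertex $v$, that corollary gives
\[
kM(v)\le 2S-p_v\le 2S .
\]
Hence $M(v)\le 2S/k$ for every $v$. Since $U_k$ is the sum of the $k$ smallest values $M(v_i)$, each of which is bounded by the largest, we obtain
\[
U_k\le k\cdot \frac{2S}{k}=2\OPT_k .
\]

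To be careful, we should confirm that Corollary~\ref{cor:path} applies as stated. Lemma~\ref{lem:equivalence} shows that $k$-fold edge coverage is equivalent to $k$-fold coverage of $X$ by the balls $B_X(v,p_v)$. The corollary's proof integrates the coverage multiplicity along a shortest path from $v$ to a farthest point, a path of length $M(v)$ by the remark after Lemma~\ref{lem:equivalence}. The only subtlety is that the multiplicity must be at least $k$ at almost every point of the path. This holds because every point of $X$ lies in at least $k$ distinct balls, and distinct stations correspond to distinct balls.

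If desired, the bound can be sharpened slightly by keeping the $-p_v$ term: $\sum_{i=1}^k kM(v_i)\le 2kS-\sum_i p_{v_i}$. This is not needed for the factor $2$. There is no real obstacle; the entire argument reduces to the per-vertex inequality already established in Corollary~\ref{cor:path}. The one point to check is that we may choose the $k$ vertices of smallest $M$ rather than the stations of the optimum, and this is fine because the inequality holds for every original vertex $v$, regardless of whether $p_v>0$.
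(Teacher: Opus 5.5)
Your proposal is correct and is essentially the paper's proof: both apply Corollary~\ref{cor:path} in the form $kM(v)\le 2S-p_v$ and drop the $-p_v$ term. The only difference is cosmetic. The paper sums this bound over the $k$ largest-budget vertices to get $U_k\le 2S-\frac1k\sum_{i=1}^k p_{u_i}$ before discarding the slack, whereas you bound every $M(v)$ uniformly by $2S/k$ and sum over the $k$ vertices with the smallest full-coverage costs.
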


\begin{proof}
Take any feasible $k$-fold coverage solution of total budget $S$. Order all vertex budgets so that $p_{u_1}\ge\cdots\ge p_{u_n}$. Corollary~\ref{cor:path} gives $kM(u_i)\le 2S-p_{u_i}$ for every $i$. Since $U_k$ is the minimum sum of full-coverage costs over $k$ distinct vertices,
\[
U_k\le\sum_{i=1}^k M(u_i)
     \le 2S-\frac1k\sum_{i=1}^k p_{u_i}
     \le 2S.
\]
Taking $S=\OPT_k$ proves the claim.
\end{proof}

\section{An improved approximation guarantee for twofold coverage}
\label{sec:twofold}

\subsection{The ball intersection graph and an independent-set bound}

We now improve the approximation ratio of Algorithm~\ref{alg:robust} to $5/3$ when $k=2$. Fix an optimal twofold-coverage solution with total budget $S=\OPT_2$. If it has exactly two positive-budget stations, each covers every edge, and hence $U_2=S$. We therefore consider the case of at least three positive-budget stations. Order the positive budgets nonincreasingly and denote the three largest by
\[
h_1\ge h_2\ge h_3=t>0.
\]

Construct the intersection graph $H$ of the positive-budget balls: each station corresponds to a vertex, and two vertices are adjacent if and only if their closed balls intersect. By the connectivity argument in Section~\ref{sec:merging}, $H$ is connected. Give vertex $v$ weight $p_v$. Let $\mathcal{I}(H)$ be the family of independent sets of $H$, and define
\[
I=\max_{J\in\mathcal{I}(H)}\sum_{v\in J}p_v.
\]
Choose a continuous center of $X$ and an original edge $e_0$ whose core path contains this center. We call $e_0$ the central edge and write $w=w(e_0)$. We establish three upper bounds on $U_2$ in terms of $S,I,w,t$.

\begin{lemma}
\label{lem:independent}
\[
U_2\le 2S-2I+w.
\]
\end{lemma}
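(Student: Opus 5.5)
The plan is to combine the rounding bound of Lemma~\ref{lem:round} with the merging bound $C\le S$ from Section~\ref{sec:merging}, applied not to the whole optimal solution but to a suitable subfamily of its balls. Lemma~\ref{lem:round}, applied to the chosen continuous center on the core path of the central edge $e_0$, gives $U_2\le 2C+w$. It therefore suffices to show
\[
C\le S-I .
\]

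First, fix a maximum-weight independent set $J\in\mathcal{I}(H)$, so that $\sum_{v\in J}p_v=I$. By definition of the intersection graph $H$, the closed balls $B_X(v,p_v)$ with $v\in J$ are pairwise disjoint. Consequently, every point $x\in X$ lies in at most one ball of $J$.

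Next, use twofold coverage. By Lemma~\ref{lem:equivalence}, every point of $X$ lies in at least two distinct station balls. Since at most one of these belongs to $J$, every point lies in at least one ball indexed by a positive-budget station outside $J$. Hence the balls $B_X(v,p_v)$ with $v\notin J$ form a finite family of closed balls covering $X$, with total budget $S-I$. Such stations exist, since $J$ alone cannot cover $X$ twice.

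Now apply the argument of Section~\ref{sec:merging} to this subfamily. Because $X$ is connected, the intersection graph of the subfamily is connected. Repeatedly merging intersecting balls via Lemma~\ref{lem:merge} never increases the total radius. It ends with a single ball of radius at most $S-I$ that covers $X$, so $C\le S-I$. Substituting into $U_2\le 2C+w$ gives $U_2\le 2S-2I+w$.

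The only delicate point is the covering claim for the complement of $J$. It relies on disjointness of independent balls, which holds because adjacency in $H$ was defined via intersection of the \emph{closed} balls. The merging step also applies verbatim, since it used only that the balls are closed and finitely many and that they cover the connected space $X$, not that their centers are original vertices or that they come from a feasible solution.
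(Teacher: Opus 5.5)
Your proof is correct and matches the paper's argument: delete the pairwise disjoint balls of a maximum-weight independent set, and note that twofold coverage leaves an ordinary cover of $X$ with budget $S-I$. Merging then gives $C\le S-I$, and the rounding bound $U_2\le 2C+w$ finishes. Your extra remarks, on disjointness of closed balls and on merging not requiring original-vertex centers, make explicit what the paper uses implicitly.
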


\begin{proof}
Take an independent set of $H$ of weight $I$. Its balls are pairwise disjoint, so each point of $X$ belongs to at most one of them. Deleting these balls from the twofold cover leaves a cover of $X$ with total budget $S-I$. Ball merging and center rounding give
\[
C\le S-I,
\qquad U_2\le 2C+w\le 2S-2I+w.
\]
\end{proof}

\subsection{Budget estimates along induced paths}

Assign length $p_i+p_j$ to each edge $ij$ of $H$. Every shortest path in this weighted graph is induced. Indeed, if a chord joined two nonconsecutive vertices on the path, replacing the intervening subpath by that chord would decrease the length by twice the sum of the budgets of the internal vertices. All vertex weights are positive, so this would strictly shorten the path, a contradiction.

\begin{lemma}
\label{lem:induced}
For every positive-budget station $u$,
\[
M(u)\le 4I-p_u.
\]
Consequently,
\[
U_2\le 8I-h_1-h_2.
\]
\end{lemma}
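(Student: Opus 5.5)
The plan is to bound $d_X(u,x)$ for a farthest point $x$ from $u$ by walking through a chain of pairwise-intersecting station balls. Then the budgets along the chain are controlled by splitting an induced path of $H$ into two independent sets.

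Fix a positive-budget station $u$ and a point $x\in X$ with $d_X(u,x)=M(u)$, using the identity $\max_{x\in X}d_X(u,x)=M(u)$ from Section~\ref{sec:ballmodel}. Since the positive-budget balls cover $X$, some positive-budget station $v$ satisfies $d_X(v,x)\le p_v$.

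\emph{Degenerate case $v=u$.} Then $M(u)\le p_u$. The singleton $\{u\}$ is independent in $H$, so $p_u\le I$, and hence $p_u\le 2I\le 4I-p_u$.

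\emph{General case.} The graph $H$ is connected, so take a shortest $u$--$v$ path $u=w_0,w_1,\ldots,w_m=v$ in $H$ with edge lengths $p_i+p_j$. By the remark preceding the lemma, this path is induced. Consecutive balls $B_X(w_i,p_{w_i})$ and $B_X(w_{i+1},p_{w_{i+1}})$ intersect, so the triangle inequality through a common point gives
\[
d_X(w_i,w_{i+1})\le p_{w_i}+p_{w_{i+1}}.
\]
Chaining these bounds and adding $d_X(v,x)\le p_v$ yields
\[
M(u)\le\sum_{i=0}^{m-1}\bigl(p_{w_i}+p_{w_{i+1}}\bigr)+p_{w_m}
=2\sum_{i=0}^{m}p_{w_i}-p_u .
\]

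\emph{Key step.} Because the path is induced, two of its vertices are adjacent in $H$ only if their indices are consecutive. Hence the even-indexed vertices form an independent set of $H$, and so do the odd-indexed vertices. Each of these sets has weight at most $I$, so $\sum_{i}p_{w_i}\le 2I$. Substituting gives $M(u)\le 4I-p_u$.

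\emph{Consequence for $U_2$.} Let $u_1\ne u_2$ be stations with budgets $h_1$ and $h_2$. They are distinct and both have positive budget, since at least three stations are positive. Because $U_2$ is the minimum sum of full-coverage costs over two distinct vertices,
\[
U_2\le M(u_1)+M(u_2)\le (4I-h_1)+(4I-h_2)=8I-h_1-h_2.
\]

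The step needing care is the use of inducedness. Without it, the path could contain chords, and the even- and odd-indexed vertices would not be independent. The preceding remark that every shortest path in the weighted $H$ is induced supplies exactly this property, since all budgets are positive. Everything else is the triangle inequality.
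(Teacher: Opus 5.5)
Your proof is correct and takes essentially the same route as the paper: a shortest path in the weighted intersection graph is induced, its even- and odd-indexed vertices give $W\le 2I$, and chaining the triangle inequality yields $M(u)\le 2W-p_u\le 4I-p_u$. The only difference is that you reach a farthest point through an arbitrary covering ball and handle the case where that ball is $u$'s own via $p_u\le I$, whereas the paper uses twofold coverage to choose a covering station $v\ne u$ for each edge; your version therefore shows the first inequality holds for any ordinary cover.
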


\begin{proof}
Take distinct positive-budget stations $u,v$ and a shortest path between them in the weighted intersection graph. Let $W$ be the sum of its vertex budgets. Since the path is induced, the vertices in odd positions and those in even positions each form an independent set. Thus, $W\le 2I$. The distance between the centers of two intersecting balls is at most the sum of their radii. The triangle inequality therefore gives
\[
d_X(u,v)\le 2W-p_u-p_v\le 4I-p_u-p_v.
\]
For every edge $e$, twofold coverage provides a station $v\ne u$ that covers it. Lemma~\ref{lem:equivalence} gives
\[
c_u(e)\le d_X(u,v)+c_v(e)
        \le 4I-p_u-p_v+p_v=4I-p_u.
\]
Taking the maximum over all edges proves the first claim. Choosing the two largest-budget stations of the original solution as full-coverage stations gives $U_2\le 8I-h_1-h_2$.
\end{proof}

\subsection{Savings from merging along the central edge}

\begin{lemma}[Geometry of the central edge]
\label{lem:central}
Suppose stations $a,b$ both cover $e_0$ and have budgets $A,B$, respectively. Then
\[
A\ge w,
\qquad B\ge w,
\qquad d_X(a,b)\le A+B-w.
\]
Their balls can be merged into a ball of radius at most $A+B-w/2$.
\end{lemma}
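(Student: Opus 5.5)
We will prove Lemma~\ref{lem:central} directly from the definitions of response costs, Lemma~\ref{lem:equivalence}, and Lemma~\ref{lem:merge}. Write $e_0=\{r,s\}$.

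The inequalities $A\ge w$ and $B\ge w$ are immediate. Since $a$ covers $e_0$,
\[
A\ge c_a(e_0)=\min\{d(a,r),d(a,s)\}+w\ge w,
\]
and the same argument applies to $b$.

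For the distance bound, we route through an endpoint of $e_0$. Let $r_a$ be an endpoint of $e_0$ nearest to $a$, and let $r_b$ be an endpoint nearest to $b$. Then
\[
d(a,r_a)\le A-w,
\qquad d(b,r_b)\le B-w.
\]
If $r_a=r_b$, the triangle inequality gives $d_X(a,b)\le A+B-2w\le A+B-w$. Otherwise $\{r_a,r_b\}=\{r,s\}$. Since $d(r,s)\le w$ (the edge itself has length $w$), we get
\[
d_X(a,b)\le (A-w)+w+(B-w)=A+B-w.
\]
Distances between original vertices coincide in $G$ and $X$, as noted in the proof of Lemma~\ref{lem:equivalence}. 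Hence in both cases $d_X(a,b)\le A+B-w$.

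For the merging claim, we apply Lemma~\ref{lem:merge} with $r=A$, $s=B$, and $d=d_X(a,b)$. Its hypothesis $d\le A+B$ holds because $w>0$. The merged radius is then at most
\[
\max\left\{A,\;B,\;\frac{A+B+d}{2}\right\}.
\]
We bound each term by $A+B-w/2$:
\begin{itemize}
\item $A\le A+B-w\le A+B-w/2$, using $B\ge w$;
\item symmetrically, $B\le A+B-w/2$, using $A\ge w$;
\item $(A+B+d)/2\le (2A+2B-w)/2=A+B-w/2$, using the distance bound.
\end{itemize}
This gives the stated radius.

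The only step needing care is the distance bound, specifically the case where $a$ and $b$ reach $e_0$ through different endpoints. There we must charge the length $w$ only once, not twice. This works because one full traversal of $e_0$ connects $r$ to $s$, so the two partial routes $a\to r_a$ and $r_b\to b$ are joined by a single copy of $e_0$.
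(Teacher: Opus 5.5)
Your proposal is correct and follows essentially the same route as the paper's proof: route through the nearer endpoints of $e_0$, get $A+B-2w$ if they coincide and $A+B-w$ via the core path otherwise, then apply Lemma~\ref{lem:merge} together with $A,B\ge w$. Your version also spells out steps the paper leaves implicit, namely the derivation of $A,B\ge w$ and the term-by-term check of the merged radius.
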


\begin{proof}
For each of $a$ and $b$, choose a nearer endpoint of $e_0$. Since both stations cover the edge, their distances to their chosen endpoints are at most $A-w$ and $B-w$, respectively. If the endpoints coincide, the center distance is at most $A+B-2w$. Otherwise, connecting the two endpoints along the core path of the central edge gives a center distance of at most $A+B-w$. Both cases yield the stated bound.

By Lemma~\ref{lem:merge}, the merged radius is at most
\[
\max\left\{A,B,\frac{A+B+d_X(a,b)}2\right\}
\le A+B-\frac w2,
\]
where we use $A,B\ge w$.
\end{proof}

\begin{lemma}
\label{lem:thirdbudget}
\[
U_2\le 2S-2t.
\]
\end{lemma}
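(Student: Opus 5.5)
The plan is to reduce to the continuous center radius $C$ and then apply the center-rounding bound $U_2\le 2C+w$ of Lemma~\ref{lem:round}. The target inequality $U_2\le 2S-2t$ will then follow once we show
\[
C\le S-t-\frac w2,
\]
since this gives $U_2\le 2C+w\le 2S-2t$. We get the saving of $t$ by deleting one redundant ball from the twofold cover. We get the extra saving of $w/2$ by merging two balls that both cover the central edge, using Lemma~\ref{lem:central}.

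\textbf{Step 1: two stations on the central edge.} Since the solution provides twofold coverage, Lemma~\ref{lem:equivalence} gives two distinct stations $a,b$, with budgets $A,B$, that both cover $e_0$.

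\textbf{Step 2: deleting a ball of budget at least $t$.} Among the three largest-budget stations (budgets $h_1\ge h_2\ge h_3=t$), at least one, call it $c$, lies outside $\{a,b\}$. Its budget satisfies $p_c\ge t$. Every point of $X$ lies in at least two balls, so deleting $B_X(c,p_c)$ still leaves a cover of $X$. This remaining cover has total budget at most $S-t$ and still contains the balls of $a$ and $b$.

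\textbf{Step 3: merging the balls of $a$ and $b$.} By Lemma~\ref{lem:central}, these two balls can be replaced by a single ball of radius at most $A+B-w/2$ that contains both. We obtain a cover of $X$ by finitely many closed balls of total radius at most $S-t-w/2$. The merged ball's center need not be an original vertex. This does not matter, because the merging argument of Section~\ref{sec:merging} and Lemma~\ref{lem:merge} work for balls with arbitrary centers in $X$. Repeatedly merging intersecting balls, using connectivity of the intersection graph, then yields a single ball covering $X$ whose radius is at most this total. Hence $C\le S-t-w/2$.

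\textbf{Step 4: rounding.} Lemma~\ref{lem:round}, applied to the chosen continuous center on the core path of $e_0$, gives $U_2\le 2C+w\le 2S-2t$.

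The only delicate point is Step 3. The saving of $w/2$ relies on Lemma~\ref{lem:central}, whose bound $\max\{A,B,(A+B+d)/2\}\le A+B-w/2$ needs both $A,B\ge w$ and $d_X(a,b)\le A+B-w$. Both hold because $a$ and $b$ each cover $e_0$. We must also make sure the deleted station $c$ differs from $a$ and $b$. That is exactly why we pick $c$ among the top three budgets rather than deleting the largest ball.
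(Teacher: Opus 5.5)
Your proposal is correct and matches the paper's proof essentially step for step. You delete a top-three ball outside $\{a,b\}$ to save $t$, merge the balls of $a$ and $b$ via Lemma~\ref{lem:central} to save $w/2$, merge the remaining balls to get $C\le S-t-w/2$, and conclude with Lemma~\ref{lem:round}. Your remark that merging applies to balls with arbitrary centers in $X$ is a point the paper leaves implicit.
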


\begin{proof}
Choose any two stations covering $e_0$, and call them $a,b$. Among the three largest-budget stations, at least one station $r$ is outside $\{a,b\}$, and $p_r\ge t$. Deleting its ball leaves an ordinary cover containing the balls of $a$ and $b$. First merge these two balls using Lemma~\ref{lem:central}, decreasing the total radius by at least $w/2$. Then merge the remaining intersecting balls until one ball covers $X$. Consequently,
\[
C\le S-p_r-\frac w2\le S-t-\frac w2.
\]
Together with Lemma~\ref{lem:round}, this gives
\[
U_2\le 2C+w\le 2S-2t.
\]
\end{proof}

\subsection{The third upper bound}

\begin{lemma}
\label{lem:thirdbound}
\[
U_2\le 8I-4w+2t.
\]
\end{lemma}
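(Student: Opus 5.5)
The plan is to mimic Lemma~\ref{lem:induced}, but to use the two endpoints $r,s$ of the central edge $e_0$ as the full-coverage stations, instead of the two largest-budget stations. It then suffices to prove, for each endpoint $x\in\{r,s\}$,
\[
M(x)\le 4I-2w+t,
\]
since adding the bounds for $r$ and $s$ gives $U_2\le M(r)+M(s)\le 8I-4w+2t$. The saving of $w$ per endpoint relative to Lemma~\ref{lem:induced} should come from the fact that every ball covering $e_0$ reaches $x$ with slack. Lemma~\ref{lem:central} shows that a station $a$ of budget $A$ covering $e_0$ has $A\ge w$, and the argument there gives $d_X(x,a)\le A-w/2$, or $A-w$ when $x$ is the nearer endpoint. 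The extra $t$ should come from choosing the anchor station $a$ to have budget at most $t$, or from discarding one large ball as in Lemma~\ref{lem:thirdbudget}.

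The key steps, in order, are as follows.
\begin{enumerate}
\item Fix an edge $e$ and pick a station $v$ covering $e$. Twofold coverage lets me choose $v$ different from a prescribed station.
\item Pick a station $a$ covering $e_0$, and take a shortest path from $a$ to $v$ in the weighted intersection graph $H$. This path is induced, as shown above.
\item Estimate
\[
c_x(e)\le d_X(x,a)+d_X(a,v)+c_v(e)\le (A-w)+(2W-A-p_v)+p_v=2W-w,
\]
where $W$ is the total budget on the path.
\end{enumerate}
The crude estimate $W\le 2I$ then only yields $4I-w$. To gain a further $w-t$, I would refine the parity argument. The two coverers $a,b$ of $e_0$ are adjacent in $H$, and I would start the path at whichever coverer gives the better bound. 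Splitting the path into odd and even positions, one class contains $a$. I want to show that this class, with $a$ removed, can be extended by a disjoint ball of weight at least $w-t$, or alternatively that $A$ itself is at most $t$. The second alternative holds when $e_0$ has a coverer outside the two largest-budget stations.

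This splits into cases according to how many of the top two stations cover $e_0$.
\begin{enumerate}
\item If some coverer $a$ of $e_0$ has $A\le t$, I would rerun the path estimate while keeping the term $A$ separate. The aim is to bound $W-A$ by $2I-w$: the ball of $a$ contains the whole gadget of $e_0$, whose diameter is $w$, and this should force extra independent weight elsewhere.
\item If every coverer of $e_0$ is among the two largest stations, those two cover $e_0$ and all other balls miss it. I would then use the freedom in step 1 to avoid one of them, so the path passes through only one large ball and the remaining budget is bounded by $t$.
\end{enumerate}

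The main obstacle is precisely this parity refinement: turning the naive $2W-w\le 4I-w$ into $4I-2w+t$. It requires a careful accounting of how the ball around $e_0$ interacts with the odd/even independent classes. If the endpoint-based approach fails, my fallback is to take the merged ball of $a$ and $b$ from Lemma~\ref{lem:central}, of radius at most $A+B-w/2$, as a single vertex of a modified intersection graph. I would then redo Lemma~\ref{lem:induced} there and apply the rounding of Lemma~\ref{lem:round} to pay the $+w$ at the end.
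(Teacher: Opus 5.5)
Your plan fails at its reduction: the per-endpoint inequality $M(x)\le 4I-2w+t$ is false, so no parity refinement can supply it. Take the path with consecutive vertices $y_3,y_2,y_1,u,r,s,z_1,z_2,z_3$ and edge lengths $25,1,25,100,100,30,1,30$. Add a vertex $a$ joined to $r$ by an edge of length $1$, and use the budgets $p_a=101$, $p_r=100$, $p_{y_1}=p_{y_2}=26$, $p_{z_1}=p_{z_2}=31$. This is a twofold cover, and a short case check on the coverers of $rs$ and $ur$ shows it is optimal, of cost $315$.

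The unique continuous center lies on the core path of $rs$ at distance $5$ from $r$. Hence $e_0=rs$, $w=100$ and $t=31$. Every small ball contains $u$ or $s$, and both of these lie in the balls of $a$ and $r$, while the $y$-balls and $z$-balls are far apart. Hence $I=101$. Nevertheless
\[
M(s)=c_s(y_3y_2)=251>235=4I-2w+t.
\]
Here even $M(r)+M(s)=412$ stays below $8I-4w+2t=470$, but your plan explicitly relies on the per-endpoint split.

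The error enters in your first estimate. Both coverers of $e_0$ approach from $r$, so $s$ is the far endpoint for each, and $d_X(s,a)=101=A$. The bound $A-w/2$ holds for the midpoint $h_{e_0}$, not for the far endpoint. Your step-3 bound $2W-w$ is valid only for the endpoint nearer to $a$. Your sub-case arithmetic also does not close. Even granting $W-A\le 2I-w$ when $A\le t$, you get $2W-w\le 4I+2A-3w\le 4I+2t-3w$. This is at most $4I-2w+t$ only if $t\le w$, that is, only when $t=w=A$.

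The missing idea is to keep the two largest-budget stations, not the endpoints, as the full-coverage pair, and to split on $w$ versus $t$.
\begin{itemize}
\item If $w\le t$, Lemma~\ref{lem:induced} gives $U_2\le 8I-h_1-h_2\le 8I-2w\le 8I-4w+2t$ at once.
\item If $w>t$, no station below the top two can cover $e_0$, so its coverers are exactly those two stations $a,b$, with $A\ge B\ge w$.
\end{itemize}

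In the second case, bound $M(a)$ as follows. Take an edge $e$ not covered by $a$, a coverer $v\notin\{a,b\}$, and a shortest path $a=v_0,v_1,\dots,v_r=v$ in the weighted intersection graph. Because the path is induced, $a$ is non-adjacent to the entire tail $v_2,\dots,v_r$. So each parity class of the tail, together with $a$, is independent, and the tail weighs at most $2(I-A)$. This gives $c_a(e)\le 4I-3A+2p_{v_1}$, where $p_{v_1}\le t$ unless $v_1=b$. In that case Lemma~\ref{lem:central} shortens the first step by $w$.

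Hence $M(a)\le 4I-3A+2\max\{t,B-w/2\}$, and symmetrically for $b$. The inequality $2\max\{t,q-w/2\}\le 3q-2w+t$ for $q\ge w>t$ then sums the two bounds to $8I-4w+2t$.

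The gain comes from starting the path at a ball of weight $A$ that can join both parity classes. An endpoint of $e_0$ need not carry a ball and can lie at distance $A$ from every coverer, so this mechanism is not available there.
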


\begin{proof}
We distinguish two cases according to the central edge length $w$ and the third-largest budget $t$.

\noindent\textit{Case 1: $w\le t$.}
At least two stations cover the central edge, and each has budget at least $w$. Hence, $h_1,h_2\ge w$. By Lemma~\ref{lem:induced},
\[
U_2\le 8I-h_1-h_2\le 8I-2w\le 8I-4w+2t.
\]

\noindent\textit{Case 2: $w>t$.}
No station ranked third or lower by budget can cover the central edge. Therefore, its two covering stations must be the two largest-budget stations. Denote them by $a,b$, with budgets $A=h_1$ and $B=h_2$. Then
\[
A\ge B\ge w>t.
\]
We first bound $M(a)$. Consider an edge $e$ not covered by $a$. Twofold coverage provides a covering station $v\notin\{a,b\}$. In the intersection graph with edge lengths $p_i+p_j$, take a shortest path from $a$ to $v$, written as
\[
a=v_0,\qquad j=v_1,\qquad v_2,\ldots,v_r=v.
\]
This path is induced, so $a$ is not adjacent to any of $v_2,\ldots,v_r$. The even-indexed tail vertices together with $a$ form an independent set, and so do the odd-indexed tail vertices together with $a$. Hence,
\[
\sum_{i=2}^r p_{v_i}\le 2(I-A).
\]
If the path has only one edge, the tail is empty and the sum is zero. Let $W$ be the sum of all vertex budgets on the path. Then
\[
W\le A+p_j+2(I-A)=2I-A+p_j.
\]
The sum of the path's edge lengths is $2W-A-p_v$. Since $c_v(e)\le p_v$,
\[
c_a(e)\le d_X(a,v)+c_v(e)
        \le 2W-A\le 4I-3A+2p_j.
\]
If $j\ne b$, then $p_j\le t$, giving $c_a(e)\le 4I-3A+2t$. If $j=b$, Lemma~\ref{lem:central} bounds the actual center distance in the first step by $A+B-w$, which saves $w$ compared with its intersection-graph edge length $A+B$. Using the usual intersecting-ball distance bound on all subsequent steps gives
\[
c_a(e)\le 4I-3A+2B-w.
\]
Combining the two possibilities, every edge not covered by $a$ satisfies
\[
c_a(e)\le 4I-3A+2\max\left\{t,B-\frac w2\right\}.
\]
Edges already covered by $a$ satisfy $c_a(e)\le A$. Since the singleton $\{a\}$ is independent, $I\ge A$, and the right-hand side of the preceding bound is also at least $A$. Taking the maximum over all edges yields
\[
M(a)\le 4I-3A+2\max\left\{t,B-\frac w2\right\}.
\]
Interchanging $a$ and $b$ gives
\[
M(b)\le 4I-3B+2\max\left\{t,A-\frac w2\right\}.
\]
For any $q\ge w>t$,
\[
2\max\left\{t,q-\frac w2\right\}\le 3q-2w+t.
\]
Indeed, if the maximum is $t$, subtracting $2t$ from the right-hand side gives $3q-2w-t\ge w-t>0$. If the maximum is $q-w/2$, subtracting $2q-w$ gives $q-w+t\ge0$. Apply this inequality with $q=B$ and $q=A$ to the two full-coverage cost bounds and add them:
\begin{align*}
M(a)+M(b)
 &\le 8I-3A-3B+(3B-2w+t)+(3A-2w+t)\\
 &=8I-4w+2t.
\end{align*}
Since $U_2\le M(a)+M(b)$, the claimed bound holds in both cases.
\end{proof}

\subsection{Combining the three bounds}

\begin{theorem}[Approximation for twofold coverage]
\label{thm:twofold}
For $k=2$, Algorithm~\ref{alg:robust} satisfies
\[
U_2\le\frac53\OPT_2.
\]
\end{theorem}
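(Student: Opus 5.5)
We reduce to the case of at least three positive-budget stations; as noted at the start of this section, if the optimal twofold solution has exactly two positive-budget stations then $U_2=S$. Otherwise the four bounds
\begin{align*}
U_2&\le 2S-2I+w, &\text{(Lemma~\ref{lem:independent})}\\
U_2&\le 2S-2t, &\text{(Lemma~\ref{lem:thirdbudget})}\\
U_2&\le 8I-4w+2t, &\text{(Lemma~\ref{lem:thirdbound})}
\end{align*}
and, if needed, $U_2\le 8I-h_1-h_2$ from Lemma~\ref{lem:induced}, are all available. The plan is to find nonnegative weights on the first three whose weighted sum eliminates $I$, $w$ and $t$ and leaves a multiple of $S$.

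Let the weights be $\lambda_1,\lambda_2,\lambda_3$ on the three bounds. The conditions are:
\begin{itemize}
\item cancelling $I$: $-2\lambda_1+8\lambda_3=0$, so $\lambda_1=4\lambda_3$;
\item cancelling $w$: $\lambda_1-4\lambda_3=0$, which is the same condition;
\item cancelling $t$: $-2\lambda_2+2\lambda_3=0$, so $\lambda_2=\lambda_3$.
\end{itemize}
Taking $\lambda_3=1$ gives weights $4,1,1$, with total weight $6$. The sum of the right-hand sides is
\[
4(2S-2I+w)+(2S-2t)+(8I-4w+2t)=10S.
\]
Hence $6U_2\le 10S$, that is, $U_2\le \tfrac53 S=\tfrac53\OPT_2$.

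Most of the work lies in the three preparatory lemmas, which are already established, so the theorem itself is a routine linear combination. The step needing care is confirming that every lemma applies under the same hypotheses. Specifically, $t>0$ must be defined, which holds because there are at least three positive-budget stations. The central edge $e_0$ and $w$ must be the same quantities in all three lemmas, which they are because both are fixed once a continuous center is chosen. The two-station case must be handled separately, which the first sentence does. Finally, the weights are positive, so adding the inequalities preserves their direction.
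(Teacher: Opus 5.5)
Your proposal is correct and is the paper's own argument: you dispose of the exactly-two-station case, then weight Lemmas~\ref{lem:independent}, \ref{lem:thirdbudget}, and~\ref{lem:thirdbound} by $4,1,1$ so that $I$, $w$, $t$ cancel and $6U_2\le 10S$. The bound from Lemma~\ref{lem:induced} that you mention as optional is never needed, since it is already used inside the proof of Lemma~\ref{lem:thirdbound}.
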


\begin{proof}
If an optimal solution has only two positive-budget stations, then $U_2=\OPT_2$. Otherwise, Lemmas~\ref{lem:independent}, \ref{lem:thirdbudget}, and~\ref{lem:thirdbound} give
\begin{align*}
U_2&\le 2S-2I+w,\\
U_2&\le 2S-2t,\\
U_2&\le 8I-4w+2t.
\end{align*}
Weight these bounds by $4,1,1$, respectively. The quantities $I,w,t$ cancel simultaneously, yielding
\begin{align*}
6U_2
 &\le 4(2S-2I+w)+(2S-2t)+(8I-4w+2t)\\
 &=10S.
\end{align*}
Since $S=\OPT_2$, the result follows.
\end{proof}

\subsection{A lower-bound instance for two-station solutions}

Consider a triangle with unit edge lengths. Assigning budget $1$ to each of its three vertices gives a twofold cover of total cost $3$. After removing stations that cover no edges, every remaining station has budget at least $1$, so a three-station solution costs at least $3$. In any two-station solution, each station must cover the opposite edge and therefore needs budget at least $2$. Thus,
\[
\OPT_2=3,
\qquad U_2=4,
\qquad \frac{U_2}{\OPT_2}=\frac43.
\]
Together with Theorem~\ref{thm:twofold}, this places the worst-case approximation ratio of the optimal two-station algorithm for twofold coverage in $[4/3,5/3]$.

\section{Conclusion and future directions}
\label{sec:conclusion}

Allowing two stations improves the approximation guarantee for ordinary Em-VC from $4/3$ to $5/4$, and our recurrence gives further bounds for each fixed station limit. For $k$-fold coverage, the best solution using exactly $k$ stations is a $2$-approximation, with an improved guarantee of $5/3$ when $k=2$. It remains open whether the bounds from the finite-station recurrence are tight. For twofold coverage, closing the gap between the lower bound of $4/3$ and the upper bound of $5/3$ for the best two-station solution is another open question.

\section*{Declaration of generative AI and AI-assisted technologies in the manuscript preparation process}

The authors used an AI assistant for translation and language revision, manuscript organization, mathematical discussion, and reference checking. The authors reviewed and edited the content as needed and assume full responsibility for all content.

\end{document}